\newtheorem{theorem}{Theorem}
\newtheorem{lemma}[theorem]{Lemma}
\newtheorem{observation}[theorem]{Observation}
\newcommand{\inset}[1]{\operatorname{in}(#1)}
\newcommand{\outset}[1]{\operatorname{out}(#1)}
\newcommand{\dual}[1]{#1{}^\ast}
\newcommand{\lpd}[1]{#1{}^+}
\newcommand{\fore}[1]{\overset{\mbox{\tiny$_\rightarrow$}}{#1}}
\newcommand{\rev}[1]{\overset{\mbox{\tiny$_\leftarrow$}}{#1}}
\begin{document}

\title{Min-Cost Flow Duality in Planar Networks\thanks{
This research was partially supported by Grant 822/10 from the Israel Science Fund, and by the Israeli Centers of Research Excellence (I-CORE) program (Center  No. 4/11).}}
\author{Haim Kaplan$^\dagger$ \and Yahav Nussbaum\thanks{
The Blavatnik School of Computer Science,
Tel Aviv University, 69978 Tel Aviv, Israel
\texttt{\{haimk,yahav.nussbaum\}@cs.tau.ac.il}}}

\date{}

\maketitle

\begin{abstract}
In this paper we study the min-cost flow problem in planar networks. We start with the min-cost flow problem and apply two transformations, one is based on geometric duality of planar graphs and the other on linear programming duality. The result is a min-cost flow problem in a related planar network whose balance constraints are defined by the costs of the original problem and whose costs are defined by the capacities of the original problem. We use this transformation to show an $O(n \log^2 n)$ time algorithm for the min-cost flow problem in an $n$-vertex outerplanar network.
\end{abstract}

\section{Introduction}

The \emph{min-cost flow} problem is a central problem in computer science and in combinatorial optimization, and one of the most important network flow problems. There are many applications and many algorithms for this problem. We refer the reader to the book of Ahuja et al.~\cite{AMO93} for a survey. In this paper we study the min-cost flow problem in \emph{planar networks}. In the min-cost flow problem every arc of the network has a cost, and our objective is to find a \emph{feasible} flow of minimum cost. A flow is feasible if it obeys the capacities and the lower bounds of the arcs and the supplies and demands of the vertices.

Every plane graph has an associated \emph{geometric dual plane graph} such that every primal face corresponds to a dual vertex, and every primal vertex corresponds to a dual face. Every linear programming problem has an associated \emph{linear programming dual problem} such that every primal constraint corresponds to a dual variable, and every primal variable corresponds to a dual constraint. The primal problem has a finite optimal solution if and only if the dual problem has one, and the values of the objective functions of both optimal solutions are equal in this case.
We start with a min-cost circulation problem, a variant of the min-cost flow problem, in a planar network. We express the problem as a problem in the geometric dual graph. Then, we find the linear programming dual of this problem. This turns out to be a min-cost flow problem, in a planar network obtained from the dual graph by possibly doubling some arcs and reversing the orientation of some arcs. The costs of the original problem define the balance constraints in the new problem, and the capacities of the original problem define the costs in the new problem. There are two possible ways in which one might exploit this transformation. First, we can exploit the trade between capacities and costs. Second, we can exploit the replacement of the original graph by its dual graph. This transformation is presented in Section~\ref{sec:reduce}.

We take advantage of the simple structure of the dual graphs of outerplanar graphs, and show a min-cost flow algorithm for an outerplanar network with $n$ vertices that runs in $O(n \log^2 n)$ time using our transformation. This algorithm is presented in Section~\ref{sec:outerplanar}. 

\subsection{Related Results} \label{sec:related}
Khuller et al.~\cite{KNK93} studied the structure of the solution space of the planar min-cost circulation problem, they suggested few directions which might lead to efficient algorithms for the problem. One of the directions which they suggested is the dual graph formulation of the min-cost circulation problem which we follow, see problem (\ref{eq:geo}) in Section~\ref{sec:reduce} below. The same approach was previously used for the maximum flow problem by Hassin \cite{H81}, followed by Hassin and Johnson \cite{HJ85}, Johnson \cite{J87}, and Miller and Naor \cite{MN95}. We note that there are other maximum flow algorithms that use planar graph duality in other ways.
Chambers et al.~\cite{CEN09} extended the same approach to graphs embedded in a surface of bounded genus. They gave a maximum flow algorithm for bounded-genus graphs, and also a min-cost circulation algorithm for finding the circulation with minimum cost among circulations satisfying a certain topological property. This last algorithm also uses the linear programming dual of the problem in the geometric dual graph.

\begin{table}
	\centering
	\small
	\begin{tabular}{c|c|c|c|}
		\cline{2-4}
		& Time bound & Restriction & Reference \bigstrut\\
		\cline{2-4}
		& $O(n^{1.594}\sqrt{\log n}\log (n\max\{C, U\}))$ & & \cite{II90} \bigstrut[t]\\
		& $O((n + \chi) \sqrt{n} \log n)$ & Positive arc costs & \cite{CK12}\\
		& $O(\sqrt{\chi}n \log^3 n)$ & Positive arc costs & \cite{CK12}\\
		\ldelim\{{2}{2pt} & $O(n^{3/2} C \sqrt{\dual{\delta}})$ & Uncapacitated symmetric networks & \cite{CK12}\\
		& $O(n^{3/2} U \sqrt{\delta})$ & Finite capacities & \cite{CKL12} \bigstrut[b]\\
		\cline{2-4}
		\ldelim\{{2}{2pt} & $O(n^2 \log U)$ & & \cite{EK72} \bigstrut[t]\\
		& $O(n^2 \log C)$ & & \cite{BJ92,R80}\\
		& $O(n^2 \log n)$ & & \cite{O93}\\
		& $\widetilde{O}(n^{3/2} \log^2 \max\{C, U\})$ (expected) & & \cite{DS08} \bigstrut[b]\\
		\cline{2-4}
		& $O(n \log n)$ & Bidirectional cycle & \cite{VA10} \bigstrut[t]\\
		& $O(n)$ & Unidirectional cycle & \cite{VA10}\\
		& $O(n \log^2 n)$ & Outerplanar networks & Section \ref{sec:outerplanar} \bigstrut[b]\\
		\cline{2-4}
	\end{tabular}
	\caption{Fastest known algorithms for the min-cost flow problem in an $n$-vertex planar network. Algorithms in the first group are for planar networks. Algorithms in the second group are algorithms for general networks that have faster running time in planar networks using a linear-time shortest path algorithm \cite{HKRS97} and the fact that a simple planar graph has $O(n)$ arcs. The third group contains algorithms for special classes of planar networks. The definitions of $C$, $U$, $\chi$, $\delta$ and $\dual{\delta}$ are in Section~\ref{sec:related} below. Time bounds grouped by a brace are equivalent under the transformation of Section~\ref{sec:reduce}.}
	\label{tbl:results}
\end{table}

There are several known algorithms for the min-cost flow problem in planar networks. We summarize them in Table~\ref{tbl:results}. We denote by $C$ the largest absolute value of an arc cost in a network with integral costs, and by $U$ the largest among the arc capacities, the lower bounds on the flows in the arcs, the vertex supplies and the vertex demands, in a network in which these values are integral. We denote by $\chi$ the total cost of the min-cost flow in a network with positive arc costs. We let $\delta$ be the maximum vertex degree in the network, and $\dual{\delta}$ be the maximum face size (number of edges along the boundary of the face).

Imai and Iwano \cite{II90} gave an algorithm based on the interior point method for linear programming whose running time is $O(n^{1.594}\sqrt{\log n}\log (n\max\{C, U\}))$ and its space requirement is $O(n \log n)$. They also gave a parallel algorithm running in $O(\sqrt{n} \log^3 n \log (n\max\{C, U\}))$ parallel time using $O(n^{1.094})$ processors. Tarjan~\cite{T91} proposed a pivot selection rule for the primal simplex algorithm for the min-cost circulation problem that has a polynomial bound on the number of pivots in planar networks. 

More recently, Vaidyanathan and Ahuja \cite{VA10} studied the min-cost flow problem in a cycle, which is a special case of a planar network, they gave an $O(n \log n)$ time algorithm for bidirectional cycle and an $O(n)$ time algorithm for unidirectional cycle.
Cornelsen and Karrenbauer \cite{CK12} studied the problem in the context of graph drawing, they showed an implementation of the primal-dual algorithm for the min-cost flow problem in planar networks with positive arc costs running in $O(\sqrt{\chi}n \log^3 n)$ time. They also showed an $O((n + \chi) \sqrt{n} \log n)$ time algorithm for networks with positive arc costs and an $O(n^{3/2} C \sqrt{\dual{\delta}})$ time algorithm for uncapacitated symmetric directed planar networks.
Cornelsen et al.~\cite{CKL12} studied the problem in the context of image processing and gave an $O(n^{3/2} U \sqrt{\delta})$ algorithm for the problem when all arcs have finite integral capacities. This last algorithm allows the network to have an additional vertex that violates its planarity.

The time bounds of some standard network flow algorithms for general networks improve when we specialize them for planar networks. For example, using the linear-time shortest path algorithm for planar graphs with non-negative weights of Henzinger et al.~\cite{HKRS97} and exploiting the fact that a simple planar graph has $O(n)$ arcs, we get that the capacity scaling algorithm of Edmonds and Karp~\cite{EK72} runs in $O(n^2 \log U)$ time and the strongly polynomial time algorithm of Orlin~\cite{O93} runs in $O(n^2 \log n)$ time.
For the cost scaling algorithm of R\"ock~\cite{R80} (see also Bland and Jensen~\cite{BJ92}) we can use the linear-time algorithm for maximum flow in a planar network where the source and the sink are adjacent \cite{H81,HKRS97} and get an $O(n^2 \log C)$ time bound.
Also, the randomized algorithm of Daitch and Spielman \cite{DS08} has $\widetilde{O}(n^{3/2} \log^2 \max\{C, U\})$ expected running time, where the $\widetilde{O}(\cdot)$ notation hides factors asymptotically less than $n^\epsilon$ for all $\epsilon > 0$.
Other min-cost flow algorithms also have better time bounds for planar networks, but these bounds are dominated by the last four algorithms.

Interestingly, if we apply our transformation to a planar network of the special kind that the $O(n^{3/2} U \sqrt{\delta})$ time algorithm \cite{CKL12} expects as an input, in which all arcs have finite integral capacities, we get a planar network of the special kind that the $O(n^{3/2} C \sqrt{\dual{\delta}})$ time algorithm \cite{CK12} excepts as an input, which is uncapacitated and symmetric. The running times of both algorithms are equivalent, since $U$ and $C$ swap their values, and $\delta$ and $\dual{\delta}$ swap their values by our transformation. Similarly, the $O(n^2 \log U)$ time bound of the capacity scaling algorithm \cite{EK72} is equivalent to the $O(n^2 \log C)$ time bound of the cost scaling algorithm \cite{BJ92,R80} under our transformation.

The linear programming dual of the min-cost flow problem is well studied. Many algorithms for the problem use a dual approach. We refer the reader to the books of Ahuja et al.~\cite[Section~9.4]{AMO93} and Schrijver \cite[Section~12.4]{S03} for a background on the dual min-cost flow problem.

\section{Preliminaries}

We consider a directed planar graph $G$ with $n$ vertices. We assume that the graph is given with a fixed embedding such that no arc crosses another, that is, it is a \emph{plane graph}. Such an embedding can be found in linear time \cite{HT76}. We assume that the graph is embedded on a sphere, so it has no infinite face, this simplifies the geometric dual formulation of our problem. An \emph{outerplanar} graph is a planar graph that has an embedding with a face $h_o$ such that every vertex of the graph is incident to $h_o$.

Every plane graph $G=(V, E)$ with face set $H$ has an associated \emph{dual graph} $\dual{G} = (H, \dual{E})$. Each vertex of $\dual{G}$ corresponds to a face of $G$. Two dual vertices are adjacent if and only if the corresponding primal faces share a common arc. This way, every arc $e \in E$ has a \emph{dual arc} $\dual{e} \in \dual{E}$. The dual arc $\dual{e}$ is oriented from the face on the left-hand side of $e$ to face on its right-hand side. See Figure~\ref{fig:dualgraph}.

\begin{figure}[t]
	\center
	\includegraphics[scale=0.75]{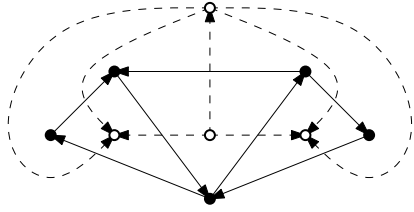} 
	\caption{A plane graph (vertices are \emph{disks}, arcs are \emph{solid}) and its dual (vertices are \emph{circles}, arcs are \emph{dashed}). This graph is outerplanar.}
	\label{fig:dualgraph}
\end{figure}

Two arcs that are incident to the same pair of vertices are \emph{parallel} if they are oriented in the same direction and \emph{antiparallel} otherwise. We assume that the input graph $G$ is a \emph{simple graph}. That is, $G$ does not have any parallel arcs or self-loops. This is a standard assumption on inputs for flow algorithms (see for example \cite{AMO93,S03}), if this is not the case then we can \emph{subdivide} any arc parallel to another and introduce a new vertex between the two new arcs. Since $G$ is simple and planar, the number of arcs in $G$ is $O(n)$. We note that some of the graphs that we create during our algorithms are not necessarily simple, however their sizes are always $O(n)$. For a vertex $v$, we denote by $\outset{v}$ the set of arcs emanating from $v$ and by $\inset{v}$ the set of arcs terminating at $v$.

A flow network consists of a graph $G = (V, E)$. Every arc $e \in E$ has an associated \emph{cost} $c(e)$ which might be negative, a \emph{capacity} $u(e)$ which can be $\infty$, and a \emph{lower bound} $\ell(e)$ such that $0 \leq \ell(e) \leq u(e)$. Every vertex $v \in V$ has an associated \emph{balance} $b(v)$ such that $\sum_{v \in V} b(v) = 0$. We call $|b(v)|$ the \emph{supply} of $v$ if $b(v) > 0$ or the \emph{demand} of $v$ if $b(v) < 0$. We do not distinguish between the flow network and its underlying graph in our notation, and denote both by $G$. One formulation of the \emph{min-cost flow} problem is:
\begin{equation*}
	\min \sum_{\mathclap{e \in E}} c(e)f(e)
\end{equation*}
subject to
\begin{align*}
	\sum_{\mathclap{e \in \outset{v}}} f(e) - \sum_{\mathclap{e \in \inset{v}}} f(e) = b(v) & \quad \mbox{ for all } v \in V \mbox{ (balance constraints),}\\
	\ell(e) \leq f(e) \leq u(e) & \quad \mbox{ for all } e \in E \mbox{ (capacity constraints).} \label{eq:circ-capacity}
\end{align*}
We use two additional formulations of the problem, both are equivalent to the formulation above. One is the \emph{min-cost circulation} problem, in which $b(v) = 0$ for every $v \in V$ and the balance constraints are called \emph{conservation constraints}. The other is the \emph{min-cost transshipment} problem, in which $\ell(e) = 0$ and $u(e) = \infty$ for every $e \in E$. There is a simple transformation from the min-cost flow problem to the min-cost transshipment problem which replaces the capacity constraints by balance constraints by subdividing capacitated arcs (see for example \cite[Section~2.4]{AMO93}).
Miller and Naor \cite{MN95} gave a planarity-preserving transformation from the min-cost flow problem to the min-cost circulation problem that replaces balance constraints by capacity constraints by adding parallel arcs to the graph, they describe this transformation for the maximum flow problem but it is easy to generalize it to the min-cost flow problem, this transformation might increase the maximum value of a capacity or a lower bound to $nU$.
An alternative transformation, which does not increase the maximum value of a capacity or a lower bound, first finds a flow that satisfies the balance constraints and then solves the min-cost circulation problem in the residual network (see definition next).
As Miller and Naor \cite{MN95} showed, we can find such a flow using a shortest path algorithm with negative arc weights. This incurs an $O(n \log^2 n / \log \log n)$ running time overhead using the algorithm of Mozes and Wulff-Nilsen \cite{MW10}. For outerplanar graphs however, the shortest path computation takes only $O(n)$ time using the algorithm of Frederickson \cite{Fre91}.

The \emph{residual network} of a flow network $G=(V,E)$ with respect to a flow function $f$ that respects the capacity constraints, denoted by $G_f$, is defined as follows. The vertex set of $G_f$ is $V$. For every arc $e \in E$ with $f(e) < u(e)$, $G_f$ contains the arc $e$ with cost $c(e)$, capacity $u(e) - f(e)$ and lower bound $0$. In addition, if $f(e) > \ell(e)$, then $G_f$ also contains the arc $e^{-1}$ antiparallel to $e$ with cost $c(e^{-1}) = -c(e)$, capacity $u(e^{-1}) = f(e) - \ell(e)$ and lower bound of $0$.

\section{The Duality} \label{sec:reduce}

We begin with the min-cost circulation problem in a planar network $G=(V,E)$ with face set $H$, cost function $c$, capacity function $u$, and lower bound function $\ell$.
\begin{subequations} \label{eq:circ}
\begin{equation}
	\min \sum_{\mathclap{e \in E}} c(e)f(e)
\end{equation}
subject to
\begin{align}
	\sum_{\mathclap{e \in \outset{v}}} f(e) - \sum_{\mathclap{e \in \inset{v}}} f(e) = 0 & \quad \mbox{ for all } v \in V \mbox{ (conservation constraints),}\\
	\ell(e) \leq f(e) \leq u(e) & \quad \mbox{ for all } e \in E \mbox{ (capacity constraints).} \label{eq:circ-capacity}
\end{align}
\end{subequations}
See Figure~\ref{fig:circ}.
For every arc $e$, we have $\ell(e) \geq 0$, however $u(e)$ might be $\infty$. If $u(e) = \infty$ we say that $e$ is \emph{uncapacitated}, and the capacity constraint~(\ref{eq:circ-capacity}) of $e$ is restricted to $\ell(e) \leq f(e)$.

\begin{figure}[t]
	\center
	\includegraphics[scale=0.75]{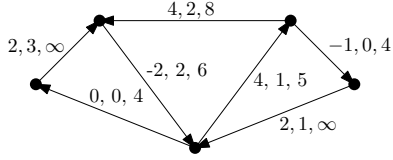} 
	\caption{A flow network for the min-cost circulation problem. The label of every arc $e$ is $(c(e), \ell(e), u(e))$, or equivalently $(c(e), -u(\rev{e}), u(\fore{e}))$.}
	\label{fig:circ}
\end{figure}

An alternative way to present problem~(\ref{eq:circ}) is by using \emph{antisymmetric flow} formulation. For every arc $e = (u, v) \in E$ we define two opposite darts -- a forward dart $\fore{e}$ which is oriented in the same direction as the arc $e$ and has $u(\fore{e}) = u(e)$ and $c(\fore{e}) = c(e)$, and a backward dart $\rev{e}$ which is oriented in the opposite direction, from $v$ to $u$, with $u(\rev{e}) = -\ell(e)$ and $c(\rev{e}) = -c(e)$. The antisymmetry constraint requires that $f(\fore{e}) = -f(\rev{e})$. We denote the set of darts $\{\fore{e}, \rev{e} \mid e \in E\}$ by $D$. The antisymmetric formulation of the min-cost circulation problem follows.

\begin{subequations} \label{eq:anti}
\begin{equation}
	\min \sum_{\mathclap{e \in E}} c(\fore{e})f(\fore{e})
\end{equation}
subject to
\begin{align}
	f(\fore{e}) = -f(\rev{e}) & \quad \mbox{ for all } e \in E \mbox{ (antisymmetry constraints),}\\
	\sum_{\mathclap{{e \in \outset{v}}}} f(\fore{e}) - \sum_{\mathclap{e \in \inset{v}}} f(\fore{e}) = 0 & \quad \mbox{ for all } v \in V \mbox{ (conservation constraints),}\\
	f(d) \leq u(d) & \quad \mbox{ for all } d \in D \mbox{ (capacity constraints).} \label{eq:anti-capacity}
\end{align}
\end{subequations}
Again, for uncapacitated darts we do not have the capacity constraint (\ref{eq:anti-capacity}).
It is easy to verify that problem~(\ref{eq:circ}) and problem~(\ref{eq:anti}) are equivalent. In problem (\ref{eq:anti}) darts have only upper bounds (\ref{eq:anti-capacity}) and each arc has antisymmetric costs in both directions. These properties simplify the following representation of the problem using the geometric dual graph.

In the dual graph, the arc $\dual{e}$ which corresponds to $e$ also defines two darts. The dart $\dual{\fore{e}}$ dual to $\fore{e}$ and the dart $\dual{\rev{e}}$ dual to $\rev{e}$. We denote by $\dual{D}$ the set of dual darts $\{\dual{\fore{e}}, \dual{\rev{e}} \mid e \in E\}$. For simplicity of notation, for a dart $d$ with dual dart $\dual{d}$ we denote $u(\dual{d}) = u(d)$.

We can decompose any circulation in $G$ into clockwise cycles of flow around the faces of $G$, since the faces constitute a cycle basis for the graph. The flow around each face by itself might by infeasible, but the sum of all cycles is a feasible circulation. The cost of sending one unit of flow clockwise around the face $h$ is the total cost of all darts on the cycle going clockwise around $h$. In other words, we take the cost of every arc $e$ such that $h$ is on its right-hand side and the negation of the cost of every arc $e$ such $h$ is on its left-hand side. We denote the \emph{cost of the face $h$} by $c(h)$. This yields the following geometric dual formulation of the problem, which was first presented by Khuller et al.~\cite{KNK93}:

\begin{figure}[t]
	\center
	\includegraphics[scale=0.75]{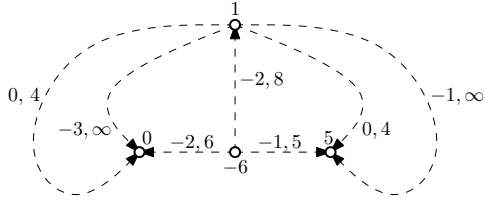} 
	\caption{Geometric dual formulation of the network from Figure~\ref{fig:circ}. The label of every dual vertex $h$ is $c(h)$, the label of every arc $e$ is $(u(\rev{e}), u(\fore{e}))$.}
	\label{fig:geodual}
\end{figure}

\begin{subequations} \label{eq:geo}
\begin{equation}
	\min \sum_{\mathclap{h \in H}} c(h)\pi(h)  \label{eq:geo-object}
\end{equation}
subject to
\begin{align}
	\pi(h) - \pi(g) \leq u(d) & \quad \mbox{ for all } d = (g, h) \in \dual{D} \mbox{ (capacity constraints).}
\end{align}
\end{subequations}
That is, the total cost of the flow is the total cost of the flow around the faces, where the difference in the amount of flow around two adjacent faces $h$ and $g$ is bounded by the capacity of the dart $d$ with $g$ on its left and $h$ on its right. See Figure~\ref{fig:geodual}. Again, we do not create capacity constraints for uncapacitated darts.

Our next step is to define the linear programming dual of (\ref{eq:geo}). We change the objective function (\ref{eq:geo-object}) to $\max \sum_{h \in H} -c(h)\pi(h)$ and get an equivalent maximization problem, so that the dual problem is a minimization problem. Problem~(\ref{eq:geo}) has a constraint for every capacitated dart. We define a graph $\lpd{G} = (H,\lpd{E})$ whose vertex set $H$ is the same as the vertex set of $\dual{G}$, which is the face set of $G$, and its arc set $\lpd{E}$ contains an arc for every capacitated dart of $\dual{D}$. It follows that $\lpd{G}$ has an arc for every variable of the linear programming dual problem. It is easy to obtain a planar embedding of $\lpd{G}$ from the embedding of $\dual{G}$.
For an arc $e \in \dual{E}$, the arc of $\lpd{E}$ corresponding to the dart $\rev{e}$ gets the embedding of $e$, but in a reverse direction (this arc always exists since $e$ must have a lower bound constraint). If the dart $\fore{e}$ is also capacitated, then the corresponding arc of $\lpd{E}$ is embedded in parallel to the arc corresponding to $\rev{e}$, in the proper direction.

The linear programming dual of problem (\ref{eq:geo}) is now:
\begin{subequations} \label{eq:dual}
\begin{equation}
	\min \sum_{\mathclap{e \in \lpd{E}}} u(e)\phi(e) \label{eq:dual-object}
\end{equation}
subject to
\begin{align}
	\sum_{\mathclap{e \in \outset{h}}} \phi(e) - \sum_{\mathclap{e \in \inset{h}}} \phi(e) = c(h) & \quad \mbox{ for all } h \in H \mbox{,} \label{eq:dual-balance}\\
	0 \leq \phi(e) & \quad \mbox{ for all } e \in \lpd{E} \mbox{.}
\end{align}
\end{subequations}
Where $u(e)$ for $e \in \lpd{E}$ in the objective function (\ref{eq:dual-object}) is the upper capacity of the dart $d \in \dual{D}$ corresponding to $e$. Notice that we multiplied (\ref{eq:dual-balance}) by $-1$, to get balance constraints of the standard form. See Figure~\ref{fig:lpd}.

\begin{figure}[t]
	\center
	\includegraphics[scale=0.75]{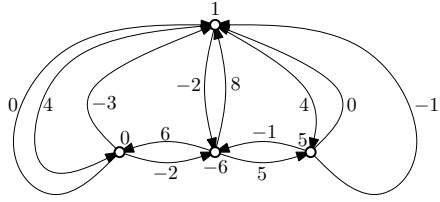} 
	\caption{Linear programming dual formulation of the network from Figure~\ref{fig:geodual}. The label of each vertex $h$ is its balance, the label of each arc $e$ is its cost.}
	\label{fig:lpd}
\end{figure}

\begin{observation}
	Problem~(\ref{eq:dual}) is the min-cost transshipment problem for the planar network $\lpd{G}$ with balance vector $c$ and cost function $u$.
\end{observation}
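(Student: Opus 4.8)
The plan is to verify that the linear program (\ref{eq:dual}) matches, ingredient for ingredient, the definition of the min-cost transshipment problem given in the Preliminaries. Recall that the min-cost transshipment problem is the min-cost flow problem specialized to $\ell(e) = 0$ and $u(e) = \infty$ for every arc. So I would simply read off each component of (\ref{eq:dual}) and identify it with the corresponding component of that definition, after first confirming that the data $(c, u)$ constitute a legal instance.

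First, the underlying network is $\lpd{G} = (H, \lpd{E})$, which is planar: the text already explains how to obtain a planar embedding of $\lpd{G}$ from the embedding of $\dual{G}$ by reversing each arc dual to a $\rev{e}$ and adding a parallel copy for each capacitated $\fore{e}$. The decision variables $\phi(e)$, $e \in \lpd{E}$, play the role of the flow $f$. Comparing the objective $\min \sum_{e \in \lpd{E}} u(e)\phi(e)$ with $\min \sum_{e} c(e) f(e)$ identifies the cost function of $\lpd{G}$ with $u$. Comparing the balance constraint (\ref{eq:dual-balance}), namely $\sum_{e \in \outset{h}} \phi(e) - \sum_{e \in \inset{h}} \phi(e) = c(h)$, with $\sum_{e \in \outset{v}} f(e) - \sum_{e \in \inset{v}} f(e) = b(v)$ identifies the balance vector with $c$, i.e.\ $b(h) = c(h)$. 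Finally, the sign constraint $0 \leq \phi(e)$ together with the absence of any upper bound gives exactly $\ell(e) = 0$ and $u(e) = \infty$, which is the transshipment condition.

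The only step that requires an actual argument is checking that $c$ is a legal balance vector, i.e.\ that $\sum_{h \in H} c(h) = 0$, as required for a well-posed instance. Here I would use the definition of $c(h)$ as the total cost of the clockwise cycle of darts around $h$. Each arc $e$ has exactly one face on its left and one on its right (the graph is embedded on the sphere), and $e$ contributes $+c(e)$ to the cost of the face on its right, via the dart $\fore{e}$, and $-c(e)$ to the cost of the face on its left, via the dart $\rev{e}$. Summing $c(h)$ over all faces, the two contributions of every arc cancel, so $\sum_{h \in H} c(h) = 0$, as needed.

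I expect the term-by-term identification to be routine; the one place to be careful is this cancellation and, implicitly, the orientation conventions (left versus right, forward versus backward darts) that fix the signs, so that the balance constraint really reads $= c(h)$ and not $= -c(h)$. The multiplication of (\ref{eq:dual-balance}) by $-1$, noted in the text, is precisely what aligns the dual of the maximization form of (\ref{eq:geo}) with the standard orientation of balance constraints, so I would make sure that conversion is tracked consistently throughout the chain (\ref{eq:circ})--(\ref{eq:anti})--(\ref{eq:geo})--(\ref{eq:dual}).
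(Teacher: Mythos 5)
Your proposal is correct and matches the paper's reasoning: the observation is stated without proof precisely because it follows by reading off the components of problem~(\ref{eq:dual}) and matching them to the definition of the min-cost transshipment problem, exactly as you do. Your additional check that $\sum_{h \in H} c(h) = 0$ (each arc contributing $+c(e)$ to the face on its right and $-c(e)$ to the face on its left, so the sum telescopes to zero) is a sensible verification that the instance is well-posed, which the paper leaves implicit.
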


We created at most two separate arcs in $\lpd{G}$ for the two darts $\dual{\fore{e}}$ and $\dual{\rev{e}}$ in $\dual{D}$ corresponding to a single arc $e$ in the input network $G$ of the original problem. Therefore the number of arcs in $\lpd{G}$ may be larger than the number of arcs in $G$ by at most a factor of $2$.

Some min-cost flow algorithms that solve problem~(\ref{eq:dual}) use a dual approach and thus also produce a solution to the dual min-cost flow problem~(\ref{eq:geo}). If this is not the case, then we can get a solution for problem~(\ref{eq:geo}) by running a single-source shortest path algorithm (with negative arc weights) in the residual network with respect to the solution of problem~(\ref{eq:dual}), using costs as lengths (see for example \cite[Section~9.5]{AMO93}). This takes $O(n \log^2 n / \log \log n)$ time using the algorithm of Mozes and Wulff-Nilsen \cite{MW10}, or $O(n)$ time if the network is outerplanar using the algorithm of Frederickson \cite{Fre91}.
Once we have a solution for problem~(\ref{eq:geo}) it is straightforward to obtain a solution for the original problem~(\ref{eq:circ}), since the flow in an arc $e$ is the difference between the values of $\pi$ for the face to the right of $e$ and the face to the left of $e$.

\section{Min-Cost Flow in an Outerplanar Network} \label{sec:outerplanar}

In this section we present an algorithm for the min-cost flow problem in outerplanar networks. We assume without loss of generality that the underlying undirected graph is biconnected, since we can solve the min-cost flow problem in every biconnected component separately, by updating the balance at each articulation point $v$ of a biconnected component $C$ to represent the total balance of the part of the graph which $v$ separates from $C$. This simplifies the presentation of our result. We solve the min-cost circulation problem~(\ref{eq:circ}) by transforming the problem to the min-cost transshipment problem~(\ref{eq:dual}) in the network $\lpd{G}$, as defined in Section~\ref{sec:reduce}. We use the simple structure of $\lpd{G}$ when $G$ is outerplanar to obtain a near-linear time solution for this problem.

The input biconnected outerplanar graph $G$ has an \emph{outer face} $h_o$ which is incident to all vertices of $V$. If we remove the dual vertex $h_o$ from the dual graph $\dual{G}$ we get a directed graph whose underlying undirected graph is a tree \cite{FGH74,S79}. Every arc $e$ of $\dual{E}$ defines at most two arcs of $\lpd{E}$, one for each capacitated dart. Therefore, if we remove $h_o$ from $\lpd{G}$ we get a special structure which we call \emph{directed fat tree}, that is a directed graph such that if replace every arc with an undirected edge and merge every pair of parallel edges into a single edge we get a tree.

In this section we present an $O(n \log^2 n)$ time min-cost transshipment algorithm for a planar network $G$ with $n$ vertices and $O(n)$ arcs, a balance vector $b$, a cost function $c$, and a vertex $v_0$ such that removing $v_0$ from $G$ yields a directed fat tree. From the discussion above, the algorithm together with our reduction from Section~\ref{sec:reduce} give an $O(n \log^2 n)$ time algorithm for the min-cost flow problem in outerplanar networks. We use a divide-and-conquer algorithm to solve our min-cost transshipment problem. Our approach is similar to the one of Cornelsen et al.~\cite{CKL12}, but we use sophisticated dynamic tree data structures and get a strongly polynomial time bound. Here we do not assume that $G$ is simple, since $G$ was produced by the transformations of Section~\ref{sec:reduce}, which might create parallel arcs.
Our algorithm is recursive and may create negative-cost cycles in networks on which it applies recursively. For this reason we set the capacity of every (uncapacitated) arc to $nU + 1$. This does not change the solution of the problem and ensures that we do not create negative cycles of infinite capacity.

\begin{figure}[b]
	\center
	\includegraphics[scale=0.75]{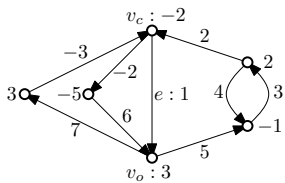} 
	\caption{A flow network for the min-cost transshipment problem. The label of each vertex is its balance, the label of each arc is its cost. If we remove $v_0$ we get a directed fat tree whose center vertex is $v_c$. The arc $e$ connects $v_c$ and $v_o$.}
	\label{fig:outerdual}
\end{figure}

Let $T = G[V \setminus \{v_o\}]$ be the directed fat tree that we get from $G$ by removing the vertex $v_o$. Let $v_c$ be a \emph{center vertex} of $T$, whose removal separates $T$ into connected components each of size at most $n / 2$. See Figure~\ref{fig:outerdual}. For a connected component $C$, let $G_C$ be the subgraph of $G$ induced by $C \cup \{v_o, v_c\}$. Every arc of $G$ except arcs connecting $v_o$ and $v_c$ belongs to a unique subgraph $G_C$.

For every subgraph $G_C$, we modify $G_C$ by merging $v_c$ into $v_o$ as follows. First we make sure that there is only one arc connecting $v_c$ and $v_o$ (in any direction). If there is more than one such arc we remove all arcs between the two vertices but one (regardless of its direction). If the two vertices are not connected, then we add an arc between them without violating the planarity of the graph. We can do so since $v_o$ is on the boundary of every face of the graph. We insert this arc only to define the merge of $v_c$ into $v_0$, so its direction is arbitrary. Let $e$ be the single arc connecting $v_c$ and $v_o$. We merge $v_c$ into $v_o$ by \emph{contracting} $e$, such that $e$ and $v_c$ are removed, every arc incident to $v_c$ becomes incident to $v_o$ instead, and the cyclic order of all arcs around $v_o$ remains unchanged, where the arcs that were adjacent to $v_c$ are inserted where $e$ was.
We set the balance constraint of $v_o$ to $- \sum_{v \in C} b(v)$, so that the total balance of $G_C$ is zero. The cost and capacity of every arc remains in $G_C$ the same as in $G$. The structure of each subgraph $G_C$ is similar to the structure of $G$ -- if we remove $v_o$ from $G_C$ we get a directed fat tree. We solve the problem on each subgraph recursively. See Figure~\ref{fig:components}\subref{fig:components:components}. Merging $v_c$ and $v_o$ may create negative-cost cycles, for this reason we set earlier the capacity of every uncapacitated arc to $nU + 1$.

\begin{figure}[t]
	\center
	\subfloat[][]{\includegraphics[scale=0.75]{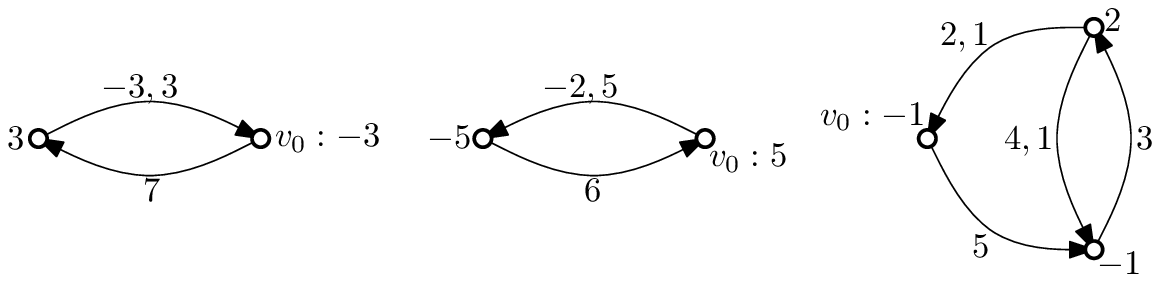}\label{fig:components:components}}
	\hspace{2cm}
	\subfloat[][]{\includegraphics[scale=0.75]{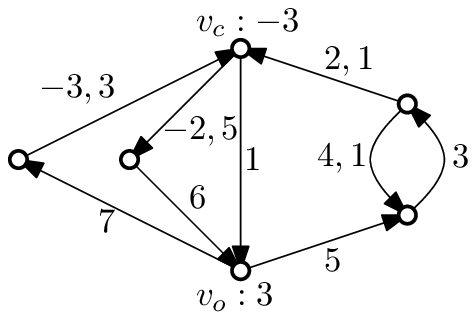}\label{fig:components:recursiveflow}}
	\caption{
	(a) The components of the network from Figure~\ref{fig:outerdual}, after merging $v_c$ into $v_o$ by contracting $e$ in each component. The label of each vertex is its balance; note that the balance of $v_o$ at each component is different. The label of each arc is its cost and the amount of flow on it in an optimal solution, if it is non-zero.
	(b) The combined flow of all components in the original network. The excess of $v_o$ is $3$ and the excess of $v_c$ is $-3$. The other vertices of the network are balanced. }
	\label{fig:components}
\end{figure}

We combine the solutions of all subproblems by setting $f(e) = f_C(e)$ for every arc $e$, where $f_C(e)$ is the flow assigned to $e$ by the unique subproblem containing it. This assigns flow to every arc of $G$, excluding arcs between $v_c$ and $v_o$. For every arc $e$ between $v_c$ and $v_o$, in any direction, we set $f(e) = u(e)$ if $c(e) < 0$ and $f(e) = 0$ otherwise. See Figure~\ref{fig:components}\subref{fig:components:recursiveflow}. As the next lemma shows, the flow function $f$ that we defined is an optimal min-cost flow for our problem in $G$ in the sense that its residual network contains no negative-cost cycle.
\begin{lemma}
	Let $f$ be the flow function for $G$ obtained as above by combining the flow functions $f_C$ of the subproblems $G_C$, then $G_f$, the residual network of $f$, contains no negative-cost cycle.
\end{lemma}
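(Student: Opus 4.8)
The plan is to establish the negative-cycle optimality condition directly in its contrapositive form: I will take an arbitrary cycle $\Gamma$ in $G_f$ and show that its cost is nonnegative, so that no negative cycle can exist. Since a negative closed walk always contains a simple negative cycle, it suffices to treat simple cycles $\Gamma$, which visit each of $v_c$ and $v_o$ at most once. The backbone of the argument is a structural observation inherited from the shape of $G$: because $T = G[V \setminus \{v_o\}]$ is a fat tree with center $v_c$, deleting $v_c$ from $T$ separates it into the components $C$, and no arc of $G$ joins two distinct components directly, since every connection between components must pass through $v_c$ or $v_o$. Hence deleting the pair $\{v_c, v_o\}$ from $G$, and therefore from $G_f$, leaves exactly the components $C$ pairwise disconnected. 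The proof will be by induction on the size of the instance, using that each recursive call on a component (of size at most $n/2$) returns a flow $f_C$ whose residual network contains no negative-cost cycle.

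First I would record the cost of the residual arcs joining $v_c$ and $v_o$. By the rule defining $f$ on these arcs, namely saturating those of negative cost and leaving empty those of nonnegative cost, every residual arc between $v_c$ and $v_o$ in $G_f$ has nonnegative cost: a saturated negative arc $e$ contributes only its reverse residual, of cost $-c(e) > 0$, while an empty nonnegative arc $e$ contributes only its forward residual, of cost $c(e) \ge 0$. Next I would split the analysis according to which of $v_c$ and $v_o$ lie on $\Gamma$. If $\Gamma$ avoids at least one of them, then by the disconnection property all vertices of $\Gamma$ other than $v_c$ and $v_o$ lie in a single component $C$, so $\Gamma$ uses only arcs of $G_C$. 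Since $f$ agrees with $f_C$ on these arcs (the merge only alters arcs incident to $v_c$ or $v_o$), the cycle $\Gamma$, after the contraction identifying $v_c$ with $v_o$, is a cycle in the residual network of $f_C$ in $G_C$, and hence has nonnegative cost by the inductive hypothesis.

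The main case, and the one I expect to be the crux, is when $\Gamma$ passes through both $v_c$ and $v_o$. Then $\Gamma$ decomposes into two internally disjoint $v_c$--$v_o$ paths $P_1$ and $P_2$, and I would argue that each $P_j$ is either a single direct residual arc between $v_c$ and $v_o$ or a path whose internal vertices all lie in one component; this again follows from the disconnection property, since those internal vertices form a connected set avoiding $v_c$ and $v_o$. A direct arc has nonnegative cost by the computation above. A through-component path, being built from arcs of a single $G_C$ on which $f = f_C$, maps under the identification of $v_c$ with $v_o$ induced by the contracted arc to a simple cycle in the residual network of $f_C$ in $G_C$, whose cost equals that of the path and is nonnegative by induction; summing the two contributions gives $\mathrm{cost}(\Gamma) \ge 0$. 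The delicate points to verify carefully are that cost is preserved under the contraction merging $v_c$ into $v_o$, that the residual arcs of each subpath genuinely survive in $G_C$ because the flow values coincide there, and that when both paths traverse the same component the resulting closed walk through the merged vertex splits into two legitimate simple cycles of $G_C$ rather than a single degenerate object. With these checked, every case yields a nonnegative cycle cost, establishing that $G_f$ contains no negative-cost cycle.
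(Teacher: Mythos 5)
Your proof is correct and takes essentially the same route as the paper's: both arguments rest on the facts that every residual arc between $v_c$ and $v_o$ has non-negative cost, that any cycle segment whose interior avoids $\{v_c, v_o\}$ lies in a single component where $f$ agrees with $f_C$, and that under the contraction of $v_c$ into $v_o$ such a segment becomes a cycle in the residual network of $f_C$, which contains no negative-cost cycle. The only difference is presentational: the paper argues by contradiction (a negative cycle in $G_f$ would yield a negative $v_c$--$v_o$ subpath and hence a negative residual cycle in some $G_C$), while you prove the contrapositive with an explicit reduction to simple cycles and a two-path decomposition.
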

\begin{proof}
	Assume for contradiction that such a negative-cost cycle $Q$ exists in $G_f$. Since the residual network of any flow $f_C$ of one of the subproblems does not contain a negative-cost cycle, the cycle $Q$ must contain $v_c$ and $v_o$. Since any residual arc between $v_c$ and $v_o$ has non-negative cost, there must be a subpath $Q'$ of $Q$ between $v_c$ and $v_o$ with negative total cost, whose vertices other than $v_c$ and $v_o$ belong to a single component $C$. However, in the graph $G_C$ we merged $v_c$ into $v_o$, and therefore $Q'$ defines a negative-cost cycle in the residual network of $f_C$, contradicting the optimality of $f_C$.
\end{proof}
However, the flow $f$ is not necessarily feasible, since the balances at $v_o$ and at $v_c$ might be wrong. We fix this in a way similar to the successive shortest path algorithm (see \cite[Section~9.7]{AMO93} and references therein). The successive shortest path algorithm sends flow from a vertex with an excess to a vertex with a deficit along a shortest path in the residual network, with respect to the cost function. This way no new negative-cost residual cycle is created.

The \emph{excess} of $v_o$ with respect to $f$ is $e_f(v_o) = b(v_o) + \sum_{e \in \inset{v_o}} f(e) - \sum_{e \in \outset{v_o}} f(e)$, the excess of $v_c$, $e_f(v_c)$, is equal to $-e_f(v_o)$, since the balance constraint of every other vertex in $G$ is satisfied. If $e_f(v_o) = 0$ then $f$ is a feasible solution to our min-cost flow problem and we are done. Otherwise, we assume that $e_f(v_o) > 0$, and we should send $e_f(v_o)$ units of flow from $v_o$ to $v_c$. The case where $e_f(v_o) < 0$ is symmetric. As in the successive shortest path algorithm, we always send a flow along a residual path of minimum total cost.

We work in the residual network $G_f$. We construct a directed fat tree $T'$ from $G_f$ as follows. We split the vertex $v_o$ into separate copies, one for each vertex adjacent to $v_o$ (possibly by more than one arc), such that each copy of $v_0$ is adjacent to a single vertex. These copies are leaves of $T'$. Next, we remove from $T'$ the arcs that are not on some directed path from a copy of $v_o$ to $v_c$, this can be done in linear time. The resulting graph $T'$ is a directed fat tree with $O(n)$ arcs. See Figure~\ref{fig:fattree}\subref{fig:fattree:fattree}. Recall that we do not assume that $G$ is simple, moreover the residual network $G_f$ might contain parallel arcs $(u, v)$ and $(v, u)^{-1}$, even if $G$ is simple, when one of the recursive calls sent some flow along the arc $(v, u)$. Solving the min-cost flow problem in a directed fat tree is a simple task, however our problem is somewhat more complicated since all copies of $v_o$ share a single excess.

\begin{figure}[t]
	\center
	\subfloat[][]{\includegraphics[scale=0.75]{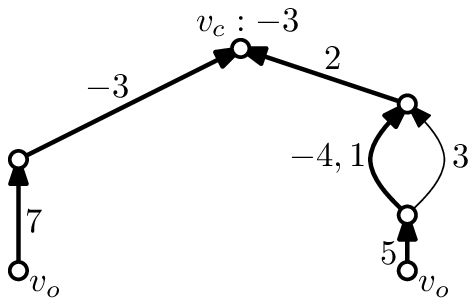}\label{fig:fattree:fattree}}
	\hspace{3cm}
	\subfloat[][]{\includegraphics[scale=0.75]{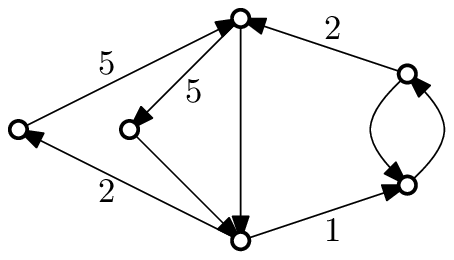}}
	\caption{
	(a) The directed fat tree $T'$ for the residual network of the flow from Figure~\ref{fig:components} after we remove all the arcs and vertices which are not on a directed path from a copy of $v_o$ to $v_c$. The label of each arc is its cost and its capacity, if it is finite (there is only a single arc with finite capacity in this example). There are two copies of $v_o$ in $T'$. The initial tree $\hat{T}$ is \emph{bold}. The cost of the minimum cost residual path from the right-hand side copy of $v_o$ to $v_c$ is $5 + (-4) + 2 = 3$. We first send $1$ unit of flow along this path, and saturate one of its arcs. After we replace the saturated arc of cost $-4$ with an arc of cost $3$, the minimum cost path from a copy of $v_o$ to $v_c$ is now from the left-hand side copy of $v_o$. We send $2$ units of flow along this path.
	(b) The final solution, an optimal flow for the network from Figure~\ref{fig:outerdual}. The label of each arc is the amount of flow assigned to it.}
	\label{fig:fattree}
\end{figure}

For every set of parallel arcs, we are interested only in the one with the smallest cost. We define the \emph{active tree} $\hat{T}$ to be the tree that we obtain from $T'$ by keeping only the arc with the smallest cost from each set of parallel arcs. We root the tree $\hat{T}$ at $v_c$. We represent $\hat{T}$ using two dynamic tree data structures -- $T_u$ which represents the capacities and $T_c$ which represents the costs. The structure of both trees is the same as $\hat{T}$. Each arc in $T_u$ has the same capacity as in $G_f$. In the tree $T_c$ we assign to each copy of $v_o$ the total cost of arcs on the unique path in $\hat{T}$ from this copy of $v_o$ to $v_c$.

We implement the tree $T_u$ using the dynamic tree data structure of Sleator and Tarjan \cite{ST83}. This data structure maintains a capacity for each arc. It provides operations that allow us to cut an arc from $T_u$, together with its subtree, or to replace it with a parallel arc. In addition, we can find the minimum capacity of an arc on the path from a given vertex to the root of the tree $v_c$. We can also implicitly update the capacities of all arcs on this path by subtracting a constant from the capacity of each arc.
We implement the tree $T_c$ using an Euler tour tree data structure \cite{HK99,T97}. This data structure maintains the cost which we assign for every copy of $v_o$. It also provides operations that allow us to cut an arc or replace it with a parallel arc, as the previous data structure. In addition, it can find the vertex with minimum cost in the tree, and can implicitly update the cost of all vertices in the subtree of a given vertex by adding a constant to their costs. Each operation on $T_c$ and $T_u$ is implemented in $O(\log n)$ time. Both tree structures are constructed in linear time and require linear space.

As long as $e_f(v_o) > 0$ and there is at least one copy of $v_o$ in $\hat{T}$, we repeat the following procedure. We use $T_c$ to find the copy of $v_o$ whose path to $v_c$ has minimum cost. We use $T_u$ to find the capacity of this path $P$, that is $u(P) = \min \{u(e) \mid e \in P\}$. We send $\min \{u(P), e_f(v_o)\}$ units of flow along $P$ using $T_u$, and implicitly update the capacity $u$ for arcs of $P$ accordingly. We decrease $e_f(v_o)$ and increase $e_f(v_c)$ by the same amount.
If we are not done yet, that is if the new value of $e_f(v_o)$ is still larger than zero, then we saturated at least one arc $e$. If $e$ has a parallel arc $d$ in $T'$ with $u(d) > 0$ (and $c(d) \geq c(e)$) then we replace $e$ by $d$ in $\hat{T}$. If $e$ has more than one parallel arc, we choose $d$ to be the one with the smallest cost. In addition, we use the tree $T_c$ to increase the cost of all leaves in the subtree of $d$ by $c(d) - c(e)$. If $e$ does not have a parallel arc, we cut $e$ and its subtree from $\hat{T}$.
Since we always push flow along the least-cost path from $v_o$ to $v_c$, the flow $f$ is always optimal, in the sense that it contains no negative-cost residual cycle, although the excess at $v_o$ and the excess at $v_c$ remain non-zero until we are done.

If the input min-cost transshipment problem has a finite optimal solution then at the end of our algorithm $f$ is a feasible flow function such that $G_f$ does not contain negative-cost residual cycles, and so it is an optimal solution for our problem (see for example \cite[Section~9.3]{AMO93},\cite[Section~12.2]{S03}). There are two cases however in which the input problem does not have an optimal finite solution. First, it is possible that there is a negative-cost cycle in the input network. In this case our algorithm sends $nU + 1$ units of flow along this cycle, since we set the capacity of all arcs to $nU + 1$. This is the only situation in which an arc carries $nU + 1$ units of flow, so this case is easily detectible. The other possibility is that no feasible solution exists. We detect this when we disconnect all copies of $v_o$ from the tree $\hat{T}$ while $e_f(v_o)$ is still larger than zero.

Each dynamic tree operation requires $O(\log n)$ time, and at each iteration one arc leaves the trees and never returns, therefore the total running time for balancing $v_o$ and $v_c$ is $O(n \log n)$. Our algorithm has $O(\log n)$ levels of recursion, so the total running time is $O(n \log^2 n)$.
We conclude this section with the following theorem:

\begin{theorem}
	The min-cost flow problem in an outerplanar flow network with $n$ vertices can be solved in $O(n \log^2 n)$ time.
\end{theorem}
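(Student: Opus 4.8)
The plan is to assemble into a single pipeline the two ingredients already developed: the duality of Section~\ref{sec:reduce} and the divide-and-conquer transshipment algorithm of this section. First I would reduce the given outerplanar min-cost flow problem to a min-cost circulation problem~(\ref{eq:circ}), by computing a flow that meets the balance constraints and passing to its residual network, which for an outerplanar network costs only $O(n)$ time using Frederickson's shortest-path algorithm~\cite{Fre91}. Then I would apply the transformation of Section~\ref{sec:reduce} step by step, rewriting~(\ref{eq:circ}) in antisymmetric form~(\ref{eq:anti}), passing to the geometric dual~(\ref{eq:geo}), and taking its linear-programming dual to obtain the min-cost transshipment problem~(\ref{eq:dual}) on $\lpd{G}$. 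By the Observation this is a transshipment problem with balance vector $c$ and cost function $u$, and $\lpd{G}$ has $O(n)$ arcs. The key structural point is that for a biconnected outerplanar $G$ the dual $\dual{G}$ with the outer face $h_o$ removed is a tree~\cite{FGH74,S79}, so $\lpd{G}$ minus $h_o$ is a directed fat tree; hence the transformed instance is exactly of the form the algorithm of this section expects, with $v_o = h_o$.

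Next I would invoke the recursive transshipment algorithm, whose correctness has two parts. The first is that combining the subproblem solutions $f_C$ yields a flow whose residual network has no negative-cost cycle; this is precisely the preceding lemma, which uses that $v_o$ lies on every face and that merging $v_c$ into $v_o$ in each $G_C$ turns a residual negative cycle through $v_c$ and $v_o$ into a negative cycle internal to one subproblem. The second is that the balancing stage corrects the excesses at $v_o$ and $v_c$ while preserving the no-negative-cycle invariant: since each unit of excess is routed along a minimum-cost residual path from a copy of $v_o$ to $v_c$, no new negative residual cycle is created, so at termination $f$ is feasible and free of negative-cost residual cycles, hence optimal. I would also record the two degenerate cases handled in the text --- a negative-cost cycle in the input, detected by an arc carrying $nU+1$ units, and infeasibility, detected when every copy of $v_o$ is cut while $e_f(v_o)>0$.

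For the running time I would charge $O(\log n)$ per dynamic-tree operation on $T_u$ and $T_c$ and observe that each arc is removed from the active tree $\hat{T}$ at most once (either saturated and replaced by a parallel arc, or cut together with its subtree), so a single balancing stage performs $O(n)$ operations and costs $O(n\log n)$. Choosing $v_c$ as a centroid forces every component to have at most $n/2$ vertices, giving recursion depth $O(\log n)$ and total time $O(n\log^2 n)$. Finally I would recover a solution of the original problem: a single-source shortest-path computation in the residual network of the transshipment solution yields potentials $\pi$ solving~(\ref{eq:geo}) (in $O(n)$ time on outerplanar graphs via~\cite{Fre91}), the circulation flow on each arc is then the difference of $\pi$ across it, and adding back the initial balance-satisfying flow gives the answer; all these conversions are $O(n)$.

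The step I expect to be the main obstacle is the combined correctness-and-efficiency argument for the balancing stage --- in particular, verifying that maintaining the active tree (keeping only the cheapest arc of each parallel bundle) together with the replacement rule faithfully realises successive shortest paths from the shared, split copies of $v_o$ to $v_c$, and that the one-removal-per-arc accounting indeed bounds the work per level by $O(n\log n)$. The structural fact that deleting $h_o$ from the outerplanar dual leaves a tree is the other place needing care, since it is what licenses both the centroid recursion and the fat-tree data structures.
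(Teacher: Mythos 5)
Your proposal follows essentially the same route as the paper: reduce the outerplanar min-cost flow problem to a circulation, apply the duality of Section~\ref{sec:reduce} to obtain a min-cost transshipment problem on $\lpd{G}$, whose removal of $h_o$ leaves a directed fat tree, solve it by centroid-based divide-and-conquer with the dynamic-tree balancing stage (the lemma giving the no-negative-residual-cycle invariant, plus successive shortest paths), and recover the primal solution by a final $O(n)$ shortest-path computation. The only point you leave implicit---that one may assume $G$ is biconnected by solving each biconnected component separately, updating balances at articulation points---is exactly how the paper licenses the tree structure of the dual, and otherwise your argument, degenerate-case handling, and $O(n\log^2 n)$ accounting match the paper's proof.
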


\section*{Acknowledgments}
We acknowledge an anonymous referee for bringing to our attention the $O(n^2 \log C)$ time implementation of the cost scaling algorithm of R\"ock~\cite{R80} for planar networks.

\bibliographystyle{plain}

\end{document}